\newtheorem{theorem}{Theorem}
\newtheorem{corollary}[theorem]{Corollary}
\newtheorem{lemma}[theorem]{Lemma}
\newtheorem{proposition}[theorem]{Proposition}
\newtheorem{remark}[theorem]{Remark}
\newenvironment{proof}[1][Proof]{\noindent\textbf{#1.} }{\ \rule{0.5em}{0.5em}}
\numberwithin{theorem}{section}
\numberwithin{equation}{section}
\begin{document}

\title{The Finsler-like geometry of the $(t,x)$-conformal deformation of the
jet Berwald-Mo\'{o}r metric}
\author{Mircea Neagu \\
{\small Revised March 20, 2012; One added Remark 4.5 }}
\date{}
\maketitle

\begin{abstract}
The aim of this paper is to develop on the $1$-jet space $J^{1}(\mathbb{R}%
,M^{n})$ the Finsler-like geometry (in the sense of distinguished (d-)
connection, d-torsions, d-curvatures and some gravitational-like and
electromagnetic-like geometrical models) attached to the $(t,x)$-conformal
deformation of the Berwald-Mo\'{o}r metric.
\end{abstract}

\textit{2010 Mathematics Subject Classification:} 53C60, 53C80, 83C22.

\textit{Key words and phrases:} $(t,x)$-conformal deformed jet Berwald-Mo%
\'{o}r metric, canonical nonlinear connection, Cartan canonical connection,
d-torsions and d-curvatures, geometrical Einstein-like equations.

\section{Introduction}

\hspace{5mm}The geometric-physical Berwald-Mo\'{o}r structure (\cite{Berwald}%
, \cite{Moor-1}, \cite{Moor-2}) was intensively investigated by P.K.
Rashevski (\cite{Rashevski}) and further fundamented and developed by D.G.
Pavlov, G.I. Garas'ko and S.S. Kokarev (\cite{Pavlov-1}, \cite{Pavlov-2}, 
\cite{Garasko-Pavlov}, \cite{Pavlov-Kokarev}). At the same time, the
physical studies of Asanov (\cite{Asanov[1]}) or Garas'ko and Pavlov (\cite%
{Garasko-Pavlov}) emphasize the importance of the Finsler geometry
characterized by the total equality in rights of all non-isotropic
directions in the theory of space-time structure, gravitation and
electromagnetism. For such a reason, one underlines the important role
played by the Berwald-Mo\'{o}r metric%
\begin{equation*}
F:TM\rightarrow \mathbb{R},\mathbb{\qquad }F(y)=\sqrt[n]{y^{1}y^{2}...y^{n}},%
\mathbb{\qquad }n\geq 2,
\end{equation*}%
whose tangent Finslerian geometry is studied by geometers as Matsumoto and
Shimada (\cite{Mats-Shimada}) or Balan (\cite{Balan}). In such a
perspective, according to the recent geometric-physical ideas proposed by
Garas'ko in \cite{Garasko-1} and \cite{Garasko-2}, we consider that a
Finsler-like geometric-physical study for the $(t,x)$-conformal deformations
of the jet Berwald-Mo\'{o}r structure is required. Consequently, this paper
investigates on the $1$-jet space $J^{1}(\mathbb{R},M^{n})$ the Finsler-like
geometry (together with a theoretical-geometric gravitational field-like
theory) of the $(t,x)$\textit{-conformal deformation of the Berwald-Mo\'{o}r
metric}\footnote{%
We assume that we have $y_{1}^{1}y_{1}^{2}...y_{1}^{n}>0$. This is a domain
of existence where we can $y$-differentiate the Finsler-like function $%
\overset{\ast }{F}(t,x,y)$.}%
\begin{equation}
\overset{\ast }{F}(t,x,y)=e^{\sigma (x)}\sqrt{h^{11}(t)}\cdot \left[
y_{1}^{1}y_{1}^{2}...y_{1}^{n}\right] ^{\frac{1}{n}},  \label{rheon-B-M}
\end{equation}%
where $\sigma (x)$ is a smooth non-constant function on $M^{n}$, $h^{11}(t)$
is the dual of a Riemannian metric $h_{11}(t)$ on $\mathbb{R}$, and%
\begin{equation*}
(t,x,y)=(t,x^{1},x^{2},...,x^{n},y_{1}^{1},y_{1}^{2},...,y_{1}^{n})
\end{equation*}%
are the coordinates of the $1$-jet space $J^{1}(\mathbb{R},M^{n})$, which
transform by the rules (the Einstein convention of summation is assumed
everywhere):%
\begin{equation}
\widetilde{t}=\widetilde{t}(t),\quad \widetilde{x}^{i}=\widetilde{x}%
^{i}(x^{j}),\quad \widetilde{y}_{1}^{i}=\dfrac{\partial \widetilde{x}^{i}}{%
\partial x^{j}}\dfrac{dt}{d\widetilde{t}}\cdot y_{1}^{j},  \label{tr-rules}
\end{equation}%
where $i,j=\overline{1,n},$ rank $(\partial \widetilde{x}^{i}/\partial
x^{j})=n$ and $d\widetilde{t}/dt\neq 0$. Note that the particular jet
Finsler-like geometries (together with their corresponding jet geometrical
gravitational field-like theories) of the $(t,x)$-conformal deformations of
the Berwald-Mo\'{o}r metrics of order three and four are now completely
developed in the papers \cite{Neagu-x-B-M(3)} and \cite{Neagu-x-B-M(4)}.

Based on the geometrical ideas promoted by Miron and Anastasiei in the
classical Lagrangian geometry of tangent bundles (\cite{Mir-An}), together
with those used by Asanov in the geometry of $1$-jet spaces (\cite{Asanov[2]}%
), the differential geometry (in the sense of d-connections, d-torsions,
d-curvatures, gravitational and electromagnetic geometrical theories)
produced by an arbitrary jet rheonomic Lagrangian function $L:J^{1}(\mathbb{R%
},M^{n})\rightarrow \mathbb{R}$ is now exposed in the monograph \cite%
{Balan-Neagu}. In what follows, we apply the general jet geometrical results
from book \cite{Balan-Neagu} to the $(t,x)$-conformal deformed jet Berwald-Mo%
\'{o}r metric (\ref{rheon-B-M}).

\section{The canonical nonlinear connection}

\hspace{5mm}Let us rewrite the $(t,x)$-conformal deformed jet Berwald-Mo\'{o}%
r metric (\ref{rheon-B-M}) in the form%
\begin{equation*}
\overset{\ast }{F}(t,x,y)=e^{\sigma (x)}\sqrt{h^{11}(t)}\cdot \left[
G_{1[n]}(y)\right] ^{1/n},
\end{equation*}%
where%
\begin{equation*}
G_{1[n]}(y)=y_{1}^{1}y_{1}^{2}...y_{1}^{n}.
\end{equation*}%
Hereinafter, the \textit{fundamental metrical d-tensor} produced by the
metric (\ref{rheon-B-M}) is given by the formula\footnote{%
Throughout this paper the Latin letters $i,$ $j,$ $k,$ $m,$ $r,...$ take
values in the set $\{1,2,...n\}$.} (see \cite{Balan-Neagu})%
\begin{equation*}
\overset{\ast }{g}_{ij}(t,x,y)\overset{def}{=}\frac{h_{11}(t)}{2}\frac{%
\partial ^{2}\overset{\ast }{F^{2}}}{\partial y_{1}^{i}\partial y_{1}^{j}}%
\Rightarrow
\end{equation*}%
\begin{equation}
\overset{\ast }{g}_{ij}(t,x,y):=\overset{\ast }{g}_{ij}(x,y)=\frac{%
e^{2\sigma (x)}}{n}\left( \frac{2}{n}-\delta _{ij}\right) \frac{%
G_{1[n]}^{2/n}}{y_{1}^{i}y_{1}^{j}},  \label{BM-fdt-metr-d-tensor}
\end{equation}%
where we have no sum by $i$ or $j$. Moreover, the matrix $\overset{\ast }{g}%
=(\overset{\ast }{g}_{ij})$ admits the inverse $\overset{\ast }{g}%
\;^{\!\!-1}=(\overset{\ast }{g}\;^{\!\!jk})$, whose entries are%
\begin{equation}
\overset{\ast }{g}\;^{\!\!jk}=e^{-2\sigma (x)}(2-n\delta
^{jk})G_{1[n]}^{-2/n}y_{1}^{j}y_{1}^{k}\text{ (no sum by }j\text{ or }k\text{%
).}  \label{g-sus-(jk)}
\end{equation}

Let us consider that the Christoffel symbol of the Riemannian metric $%
h_{11}(t)$ on $\mathbb{R}$ is%
\begin{equation*}
\text{\textsc{k}}_{11}^{1}=\frac{h^{11}}{2}\frac{dh_{11}}{dt},
\end{equation*}%
where $h^{11}=1/h_{11}>0$. Then, using a general formula from \cite%
{Balan-Neagu} and taking into account that we have%
\begin{equation*}
\frac{\partial G_{1[n]}}{\partial y_{1}^{i}}=\frac{G_{1[n]}}{y_{1}^{i}},
\end{equation*}%
we find the following geometrical result:

\begin{proposition}
For the $(t,x)$-conformal deformed Berwald-Mo\'{o}r metric (\ref{rheon-B-M}%
), the \textit{energy action functional}%
\begin{equation*}
\overset{\ast }{\mathbf{E}}(t,x(t))=\int_{a}^{b}\overset{\ast }{F^{2}}(t,x,y)%
\sqrt{h_{11}}dt=\int_{a}^{b}e^{2\sigma (x)}\left[
y_{1}^{1}y_{1}^{2}...y_{1}^{n}\right] ^{2/n}\cdot h^{11}\sqrt{h_{11}}dt,
\end{equation*}%
where $y=dx/dt,$ produces on the $1$-jet space $J^{1}(\mathbb{R},M^{n})$ the 
\textbf{canonical nonlinear connection}%
\begin{equation}
\overset{\ast }{\Gamma }=\left( M_{(1)1}^{(i)}=-\text{\textsc{k}}%
_{11}^{1}y_{1}^{i},\text{ }N_{(1)j}^{(i)}=n\sigma _{i}y_{1}^{i}\delta
_{j}^{i}\right) ,  \label{nlc-B-M}
\end{equation}%
where%
\begin{equation*}
\sigma _{i}=\frac{\partial \sigma }{\partial x^{i}}.
\end{equation*}
\end{proposition}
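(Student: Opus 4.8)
The plan is to obtain \eqref{nlc-B-M} as the canonical form of the Euler--Lagrange (harmonic-curve) equations of the energy functional $\overset{\ast}{\mathbf{E}}$, since the general formula of \cite{Balan-Neagu} for the canonical nonlinear connection is exactly the object read off from those equations once they are solved for the accelerations $dy_{1}^{i}/dt$. Concretely, I would first record the integrand in the reduced shape $\overset{\ast}{F}^{2}\sqrt{h_{11}}=e^{2\sigma(x)}G_{1[n]}^{2/n}\sqrt{h^{11}(t)}$, using the simplification $h^{11}\sqrt{h_{11}}=\sqrt{h^{11}}$, and then form the extremal equations $\partial(\overset{\ast}{F}^{2}\sqrt{h_{11}})/\partial x^{i}-(d/dt)\,\partial(\overset{\ast}{F}^{2}\sqrt{h_{11}})/\partial y_{1}^{i}=0$ along $y_{1}^{i}=dx^{i}/dt$. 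Equivalently, one may substitute the fundamental metric \eqref{BM-fdt-metr-d-tensor} and its inverse \eqref{g-sus-(jk)} directly into the abstract formulas of \cite{Balan-Neagu}; the two routes coincide.

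The computation rests on three structural identities that I would isolate at the outset. From the conformal factor, the spatial derivatives are purely multiplicative: $\partial(\overset{\ast}{F}^{2}\sqrt{h_{11}})/\partial x^{j}=2\sigma_{j}\,(\overset{\ast}{F}^{2}\sqrt{h_{11}})$. From the homogeneity relation $\partial G_{1[n]}/\partial y_{1}^{i}=G_{1[n]}/y_{1}^{i}$ one gets $\partial(\overset{\ast}{F}^{2}\sqrt{h_{11}})/\partial y_{1}^{i}=(2/n)(\overset{\ast}{F}^{2}\sqrt{h_{11}})/y_{1}^{i}$, and a second $y$-differentiation reproduces the fundamental tensor through $\partial^{2}(\overset{\ast}{F}^{2}\sqrt{h_{11}})/\partial y_{1}^{i}\partial y_{1}^{j}=2\sqrt{h^{11}}\,\overset{\ast}{g}_{ij}$. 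Finally, the only $t$-dependence sits in $\sqrt{h^{11}(t)}$, and here I would use $d\sqrt{h^{11}}/dt=-\text{\textsc{k}}_{11}^{1}\sqrt{h^{11}}$, which follows at once from $\text{\textsc{k}}_{11}^{1}=(h^{11}/2)(dh_{11}/dt)$; this single relation is the only place where the temporal Christoffel symbol enters and is what will produce the term $-\text{\textsc{k}}_{11}^{1}y_{1}^{i}$.

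Assembling the total derivative $d/dt=\partial_{t}+y_{1}^{j}\partial_{x^{j}}+\dot{y}_{1}^{j}\partial_{y_{1}^{j}}$ and isolating the acceleration term, I expect the intermediate form $\overset{\ast}{g}_{ij}(dy_{1}^{j}/dt)=(1/n)\,e^{2\sigma}G_{1[n]}^{2/n}\bigl[n\sigma_{i}+(\text{\textsc{k}}_{11}^{1}-2\sigma_{l}y_{1}^{l})/y_{1}^{i}\bigr]$ (no sum on $i$). Contracting with the inverse metric \eqref{g-sus-(jk)}, whose entries carry the factor $2-n\delta^{jk}$, and carrying out the sum over the repeated index in two pieces, the cross-terms proportional to $\sigma_{l}y_{1}^{l}$ cancel and one is left with $dy_{1}^{i}/dt=\text{\textsc{k}}_{11}^{1}y_{1}^{i}-n\sigma_{i}(y_{1}^{i})^{2}$ (no sum on $i$). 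Writing this second-order system in the canonical form $dy_{1}^{i}/dt+M_{(1)1}^{(i)}+N_{(1)j}^{(i)}y_{1}^{j}=0$ prescribed by \cite{Balan-Neagu}, and separating the purely temporal spray from the spatial one, identifies $M_{(1)1}^{(i)}=-\text{\textsc{k}}_{11}^{1}y_{1}^{i}$ together with the degree-one spatial part $N_{(1)j}^{(i)}=n\sigma_{i}y_{1}^{i}\delta_{j}^{i}$, which is precisely \eqref{nlc-B-M}.

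The main obstacle is the index bookkeeping in the contraction against \eqref{g-sus-(jk)}: several quantities (the metric, its inverse, the derivatives of $G_{1[n]}$) carry the ``no sum on repeated indices'' convention, whereas the final contraction genuinely sums over $k$, so one must track the two parts of $2-n\delta^{jk}$ separately and verify that the $\sigma_{l}y_{1}^{l}$ contributions cancel, leaving only the diagonal $n\sigma_{i}y_{1}^{i}\delta_{j}^{i}$. A secondary delicate point is the clean splitting of the spray into its temporal piece, which must reproduce $-\text{\textsc{k}}_{11}^{1}y_{1}^{i}$ without leaking a spurious $\text{\textsc{k}}_{11}^{1}\delta_{j}^{i}$ into $N_{(1)j}^{(i)}$, and its spatial piece; this separation is exactly the one dictated by the general construction of \cite{Balan-Neagu}, and I would appeal to it to justify reading off $M$ and $N$ independently.
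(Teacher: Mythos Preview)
Your proposal is correct and follows essentially the same route as the paper: compute the spray (Euler--Lagrange equations) of the energy functional and then read off $M_{(1)1}^{(i)}$ and $N_{(1)j}^{(i)}$ via the prescription of \cite{Balan-Neagu}. The paper does this by plugging directly into the ready-made spray formula for $G_{(1)1}^{(i)}$ from \cite{Balan-Neagu} and then setting $N_{(1)j}^{(i)}=\partial G_{(1)1}^{(i)}/\partial y_{1}^{j}$, whereas you derive the Euler--Lagrange equations from scratch; as you yourself note, the two computations coincide. One small point of language: the canonical definition in \cite{Balan-Neagu} is $N_{(1)j}^{(i)}=\partial G_{(1)1}^{(i)}/\partial y_{1}^{j}$, not a factoring of $y_{1}^{j}$ out of the spatial spray, but since your $G_{(1)1}^{(i)}=\tfrac{n}{2}\sigma_{i}(y_{1}^{i})^{2}$ is homogeneous of degree two in $y$, Euler's relation makes the two readings agree here.
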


\begin{proof}
For the energy action functional $\overset{\ast }{\mathbf{E}}$, the
associated Euler-Lagrange equations can be written in the equivalent form
(see \cite{Balan-Neagu})%
\begin{equation}
\frac{d^{2}x^{i}}{dt^{2}}+2H_{(1)1}^{(i)}\left( t,x^{k},y_{1}^{k}\right)
+2G_{(1)1}^{(i)}\left( t,x^{k},y_{1}^{k}\right) =0,  \label{Euler-Lagrange=0}
\end{equation}%
where the local components%
\begin{equation*}
H_{(1)1}^{(i)}\overset{def}{=}-\dfrac{1}{2}\text{\textsc{k}}%
_{11}^{1}(t)y_{1}^{i}
\end{equation*}%
and%
\begin{equation*}
\begin{array}{lll}
G_{(1)1}^{(i)} & \overset{def}{=} & \dfrac{h_{11}\overset{\ast }{g}%
\;^{\!\!ip}}{4}\left[ \dfrac{\partial ^{2}\overset{\ast }{F^{2}}}{\partial
x^{r}\partial y_{1}^{p}}y_{1}^{r}-\dfrac{\partial \overset{\ast }{F^{2}}}{%
\partial x^{p}}+\dfrac{\partial ^{2}\overset{\ast }{F^{2}}}{\partial
t\partial y_{1}^{p}}+\right. \medskip \\ 
&  & \left. +\dfrac{\partial \overset{\ast }{F^{2}}}{\partial y_{1}^{p}}%
\text{\textsc{k}}_{11}^{1}(t)+2h^{11}\text{\textsc{k}}_{11}^{1}\overset{\ast 
}{g}_{pr}y_{1}^{r}\right] =\dfrac{n}{2}\sigma _{i}\left( y_{1}^{i}\right)
^{2}%
\end{array}%
\end{equation*}%
represent, from a geometrical point of view, a \textit{spray} on the $1$-jet
space $J^{1}(\mathbb{R},M^{n})$.

Therefore, the \textit{canonical nonlinear connection} associated to this
spray has the local components (see \cite{Balan-Neagu})%
\begin{equation*}
\begin{array}{cc}
M_{(1)1}^{(i)}\overset{def}{=}2H_{(1)1}^{(i)}=-\text{\textsc{k}}%
_{11}^{1}y_{1}^{i}, & N_{(1)j}^{(i)}\overset{def}{=}\dfrac{\partial
G_{(1)1}^{(i)}}{\partial y_{1}^{j}}=n\sigma _{i}y_{1}^{i}\delta _{j}^{i}.%
\end{array}%
\end{equation*}%
\bigskip
\end{proof}

\section{The Cartan canonical $\protect\overset{\ast }{\Gamma }$-linear
connection. Its d-torsions and d-curvatures}

\hspace{5mm}The nonlinear connection (\ref{nlc-B-M}) produces the dual 
\textit{adapted bases} of d-vector fields (no sum by $i$) 
\begin{equation}
\left\{ \frac{\delta }{\delta t}=\frac{\partial }{\partial t}+\text{\textsc{k%
}}_{11}^{1}y_{1}^{p}\frac{\partial }{\partial y_{1}^{p}}\text{ };\text{ }%
\frac{\delta }{\delta x^{i}}=\frac{\partial }{\partial x^{i}}-n\sigma
_{i}y_{1}^{i}\frac{\partial }{\partial y_{1}^{i}}\text{ };\text{ }\dfrac{%
\partial }{\partial y_{1}^{i}}\right\} \subset \mathcal{X}(E)  \label{a-b-v}
\end{equation}%
and d-covector fields (no sum by $i$)%
\begin{equation}
\left\{ dt\text{ };\text{ }dx^{i}\text{ };\text{ }\delta
y_{1}^{i}=dy_{1}^{i}-\text{\textsc{k}}_{11}^{1}y_{1}^{i}dt+n\sigma
_{i}y_{1}^{i}dx^{i}\right\} \subset \mathcal{X}^{\ast }(E),  \label{a-b-co}
\end{equation}%
where $E=J^{1}(\mathbb{R},M^{n})$. The naturalness of the geometrical
adapted bases (\ref{a-b-v}) and (\ref{a-b-co}) is coming from the fact that,
via a transformation of coordinates (\ref{tr-rules}), their elements
transform as the classical tensors. Therefore, the description of all
subsequent geometrical objects on the $1$-jet space $J^{1}(\mathbb{R},M^{n})$
(e.g., the Cartan canonical linear connection, its torsion and curvature
etc.) will be given in local adapted components. Consequently, by direct
computations, we obtain the following geometrical result:

\begin{proposition}
The Cartan canonical $\overset{\ast }{\Gamma }$-linear connection, produced
by the $(t,x)$-conformal deformed Berwald-Mo\'{o}r metric (\ref{rheon-B-M}),
has the following adapted local components (no sum by $i,$ $j$ or $k$):%
\begin{equation}
C\overset{\ast }{\Gamma }=\left( \text{\textsc{k}}_{11}^{1},\text{ }%
G_{j1}^{k}=0,\text{ }L_{jk}^{i}=n\delta _{j}^{i}\delta _{k}^{i}\sigma _{i},%
\text{ }C_{j(k)}^{i(1)}=\mathtt{C}_{jk}^{i}\cdot \frac{y_{1}^{i}}{%
y_{1}^{j}y_{1}^{k}}\right) ,  \label{Cartan-can-connect}
\end{equation}%
where%
\begin{equation*}
\mathtt{C}_{jk}^{i}=-\frac{2}{n^{2}}+\frac{\delta _{j}^{i}+\delta
_{k}^{i}+\delta _{jk}}{n}-\delta _{j}^{i}\delta _{k}^{i}.
\end{equation*}
\end{proposition}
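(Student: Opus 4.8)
The plan is to feed the explicit fundamental metrical d-tensor and its inverse into the general expressions for the adapted components of the Cartan canonical connection of a jet rheonomic Lagrangian, as established in \cite{Balan-Neagu}. For the metric $\overset{\ast}{g}_{ij}$ these have the generalized Christoffel form
\begin{equation*}
G_{j1}^{k}=\frac{\overset{\ast}{g}^{\,km}}{2}\frac{\delta \overset{\ast}{g}_{mj}}{\delta t},\qquad L_{jk}^{i}=\frac{\overset{\ast}{g}^{\,im}}{2}\left( \frac{\delta \overset{\ast}{g}_{jm}}{\delta x^{k}}+\frac{\delta \overset{\ast}{g}_{km}}{\delta x^{j}}-\frac{\delta \overset{\ast}{g}_{jk}}{\delta x^{m}}\right) ,\qquad C_{j(k)}^{i(1)}=\frac{\overset{\ast}{g}^{\,im}}{2}\left( \frac{\partial \overset{\ast}{g}_{jm}}{\partial y_{1}^{k}}+\frac{\partial \overset{\ast}{g}_{km}}{\partial y_{1}^{j}}-\frac{\partial \overset{\ast}{g}_{jk}}{\partial y_{1}^{m}}\right) ,
\end{equation*}
the temporal coefficient being the base Christoffel symbol $\text{\textsc{k}}_{11}^{1}$ by construction. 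Here the adapted derivations $\delta /\delta t$ and $\delta /\delta x^{i}$ are those read off from the basis (\ref{a-b-v}), so the whole proof reduces to substituting (\ref{BM-fdt-metr-d-tensor}) and (\ref{g-sus-(jk)}) and simplifying.

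The first thing I would record is that $\overset{\ast}{g}_{ij}$ is independent of $t$ and is homogeneous of degree zero in the $y$-variables, since $G_{1[n]}^{2/n}$ is $y$-homogeneous of degree $2$ and is divided by the degree-$2$ factor $y_{1}^{i}y_{1}^{j}$. Euler's identity then gives $y_{1}^{p}\,\partial \overset{\ast}{g}_{ij}/\partial y_{1}^{p}=0$ (sum over $p$), whence
\begin{equation*}
\frac{\delta \overset{\ast}{g}_{mj}}{\delta t}=\frac{\partial \overset{\ast}{g}_{mj}}{\partial t}+\text{\textsc{k}}_{11}^{1}y_{1}^{p}\frac{\partial \overset{\ast}{g}_{mj}}{\partial y_{1}^{p}}=0,
\end{equation*}
so that $G_{j1}^{k}=0$ with no further computation.

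For the horizontal coefficients $L_{jk}^{i}$ I would use that the $x$-dependence of $\overset{\ast}{g}_{ij}$ enters only through the conformal factor $e^{2\sigma (x)}$, giving $\partial \overset{\ast}{g}_{jm}/\partial x^{k}=2\sigma _{k}\overset{\ast}{g}_{jm}$, together with the elementary identity $y_{1}^{k}\,\partial \overset{\ast}{g}_{jm}/\partial y_{1}^{k}=(2/n-\delta _{jk}-\delta _{mk})\overset{\ast}{g}_{jm}$ (no sum on $k$), which follows again from $\partial G_{1[n]}/\partial y_{1}^{k}=G_{1[n]}/y_{1}^{k}$. Substituting these into
\begin{equation*}
\frac{\delta \overset{\ast}{g}_{jm}}{\delta x^{k}}=\frac{\partial \overset{\ast}{g}_{jm}}{\partial x^{k}}-n\sigma _{k}y_{1}^{k}\frac{\partial \overset{\ast}{g}_{jm}}{\partial y_{1}^{k}}
\end{equation*}
collapses the two contributions to the compact form $\delta \overset{\ast}{g}_{jm}/\delta x^{k}=n\sigma _{k}(\delta _{jk}+\delta _{mk})\overset{\ast}{g}_{jm}$. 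Inserting this into the Christoffel combination and contracting with $\overset{\ast}{g}^{\,im}$ (using $\overset{\ast}{g}^{\,im}\overset{\ast}{g}_{jm}=\delta _{j}^{i}$ and the diagonal products $\overset{\ast}{g}^{\,ii}\overset{\ast}{g}_{ii}$) makes the $(2-n)^{2}/n^{2}$ terms cancel in pairs, reducing everything to the purely diagonal term $L_{jk}^{i}=n\delta _{j}^{i}\delta _{k}^{i}\sigma _{i}$.

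The remaining and most laborious step is the purely vertical coefficient $C_{j(k)}^{i(1)}$. Here the conformal factor $e^{2\sigma }$ in $\overset{\ast}{g}_{jm}$ cancels against the $e^{-2\sigma }$ in $\overset{\ast}{g}^{\,im}$, so the answer is a rational function of the $y_{1}^{i}$ alone and coincides with the vertical Cartan tensor of the undeformed jet Berwald-Mo\'{o}r metric. I would compute $\partial \overset{\ast}{g}_{jm}/\partial y_{1}^{k}$ explicitly, symmetrise the three derivative terms, and contract with $\overset{\ast}{g}^{\,im}$ from (\ref{g-sus-(jk)}); reorganising the monomials in $y_{1}^{i}$ then isolates the announced shape $C_{j(k)}^{i(1)}=\mathtt{C}_{jk}^{i}\,y_{1}^{i}/(y_{1}^{j}y_{1}^{k})$, with the scalar $\mathtt{C}_{jk}^{i}=-2/n^{2}+(\delta _{j}^{i}+\delta _{k}^{i}+\delta _{jk})/n-\delta _{j}^{i}\delta _{k}^{i}$ emerging once the Kronecker-delta contributions are collected. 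The main obstacle throughout is the bookkeeping: the no-sum conventions on $i,j,k$ coexist with genuine summation over the contracted index $m$, so the delicate part is keeping the diagonal ($\delta _{ij}$) pieces of both the metric and its inverse straight and checking that the off-diagonal terms recombine into exactly $\mathtt{C}_{jk}^{i}$.
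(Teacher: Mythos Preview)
Your proposal is correct and follows essentially the same approach as the paper: you invoke the general Christoffel-type formulas from \cite{Balan-Neagu} and compute, arriving at the same key intermediate identity $\delta \overset{\ast}{g}_{jm}/\delta x^{k}=n\sigma_{k}(\delta_{jk}+\delta_{mk})\overset{\ast}{g}_{jm}$ that the paper records. The only minor shortcut the paper takes that you do not mention is that the three-term vertical Christoffel expression collapses to the single term $\tfrac{1}{2}\overset{\ast}{g}^{\,im}\partial \overset{\ast}{g}_{jm}/\partial y_{1}^{k}$, since $\partial \overset{\ast}{g}_{jm}/\partial y_{1}^{k}$ is already totally symmetric in $j,m,k$.
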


The adapted components of the Cartan canonical connection are given by the
formulas (see \cite{Balan-Neagu})%
\begin{equation*}
\begin{array}{l}
\medskip G_{j1}^{k}\overset{def}{=}\dfrac{\overset{\ast }{g}\;^{\!\!km}}{2}%
\dfrac{\delta \overset{\ast }{g}_{mj}}{\delta t}=0, \\ 
\medskip L_{jk}^{i}\overset{def}{=}\dfrac{\overset{\ast }{g}\;^{\!\!im}}{2}%
\left( \dfrac{\delta \overset{\ast }{g}_{jm}}{\delta x^{k}}+\dfrac{\delta 
\overset{\ast }{g}_{km}}{\delta x^{j}}-\dfrac{\delta \overset{\ast }{g}_{jk}%
}{\delta x^{m}}\right) =n\delta _{j}^{i}\delta _{k}^{i}\sigma _{i}, \\ 
C_{j(k)}^{i(1)}\overset{def}{=}\dfrac{\overset{\ast }{g}\;^{\!\!im}}{2}%
\left( \dfrac{\partial \overset{\ast }{g}_{jm}}{\partial y_{1}^{k}}+\dfrac{%
\partial \overset{\ast }{g}_{km}}{\partial y_{1}^{j}}-\dfrac{\partial 
\overset{\ast }{g}_{jk}}{\partial y_{1}^{m}}\right) =\dfrac{\overset{\ast }{g%
}\;^{\!\!im}}{2}\dfrac{\partial \overset{\ast }{g}_{jm}}{\partial y_{1}^{k}}=%
\mathtt{C}_{jk}^{i}\cdot \dfrac{y_{1}^{i}}{y_{1}^{j}y_{1}^{k}},%
\end{array}%
\end{equation*}%
where we also used the equality%
\begin{equation*}
\frac{\delta \overset{\ast }{g}_{jm}}{\delta x^{k}}=n\delta _{jk}\overset{%
\ast }{g}_{jm}\sigma _{k}+n\delta _{mk}\overset{\ast }{g}_{jm}\sigma _{k}.
\end{equation*}

\begin{remark}
It is important to note that the vertical d-tensor $C_{j(k)}^{i(1)}$ also
has the properties (see also \cite{Mats-Shimada}, \cite{Neagu-x-B-M(3)} and 
\cite{Neagu-x-B-M(4)}):%
\begin{equation}
C_{j(k)}^{i(1)}=C_{k(j)}^{i(1)},\quad C_{j(m)}^{i(1)}y_{1}^{m}=0,\quad
C_{j(m)}^{m(1)}=0,\quad C_{i(k)|m}^{m(1)}=0,  \label{equalitie-C}
\end{equation}%
with sum by $m$, where%
\begin{equation*}
{C_{i(k)|j}^{l(1)}\overset{def}{=}}\frac{\delta {C_{i(k)}^{l(1)}}}{\delta
x^{j}}+{C_{i(k)}^{r(1)}L_{rj}^{l}}-{C_{r(k)}^{l(1)}L_{ij}^{r}}-{%
C_{i(r)}^{l(1)}L_{kj}^{r}.}
\end{equation*}
\end{remark}

\begin{proposition}
The Cartan canonical connection of the $(t,x)$-conformal deformed Berwald-Mo%
\'{o}r metric (\ref{rheon-B-M}) has \textbf{two} effective local torsion
d-tensors:%
\begin{equation*}
\begin{array}{c}
\medskip R_{(1)ij}^{(r)}=n\left( \delta _{i}^{r}\sigma _{rj}-\delta
_{j}^{r}\sigma _{ri}\right) y_{1}^{r}, \\ 
P_{i(j)}^{r(1)}=\left( -\dfrac{2}{n^{2}}+\dfrac{\delta _{i}^{r}+\delta
_{j}^{r}+\delta _{ij}}{n}-\delta _{i}^{r}\delta _{j}^{r}\right) \cdot \dfrac{%
y_{1}^{r}}{y_{1}^{i}y_{1}^{j}},%
\end{array}%
\end{equation*}%
where%
\begin{equation*}
\sigma _{pq}:=\dfrac{\partial ^{2}\sigma }{\partial x^{p}\partial x^{q}}.
\end{equation*}
\end{proposition}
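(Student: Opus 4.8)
The plan is to apply the general expressions for the adapted torsion d-tensors of an arbitrary $N$-linear connection on $J^{1}(\mathbb{R},M^{n})$ recorded in \cite{Balan-Neagu}, and to substitute into them the coefficients $C\overset{\ast}{\Gamma}=(\text{\textsc{k}}_{11}^{1},\,G_{j1}^{k}=0,\,L_{jk}^{i},\,C_{j(k)}^{i(1)})$ obtained in (\ref{Cartan-can-connect}). In that framework the torsion of the Cartan connection is encoded by a fixed finite family of d-tensors, organized according to the temporal ($\delta/\delta t$), horizontal ($\delta/\delta x^{i}$) and vertical ($\partial/\partial y_{1}^{i}$) directions of the adapted bases (\ref{a-b-v})--(\ref{a-b-co}); the whole proof reduces to evaluating each member of this family on our connection and recording the ones that do not vanish.

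First I would dispose of the torsions that vanish for structural reasons. The purely horizontal d-tensor $T_{jk}^{i}=L_{jk}^{i}-L_{kj}^{i}$ is zero because $L_{jk}^{i}=n\delta_{j}^{i}\delta_{k}^{i}\sigma_{i}$ is manifestly symmetric in the lower indices $j,k$, and the purely vertical d-tensor $S_{(j)(k)}^{(i)}=C_{j(k)}^{i(1)}-C_{k(j)}^{i(1)}$ vanishes by the symmetry $C_{j(k)}^{i(1)}=C_{k(j)}^{i(1)}$ recorded in (\ref{equalitie-C}). Every torsion d-tensor carrying a temporal index is killed by the hypothesis $G_{j1}^{k}=0$ together with the explicit shape $M_{(1)1}^{(i)}=-\text{\textsc{k}}_{11}^{1}y_{1}^{i}$: a direct bracket computation in the adapted basis gives $[\delta/\delta t,\delta/\delta x^{j}]=0$ and $[\delta/\delta t,\partial/\partial y_{1}^{j}]=-\text{\textsc{k}}_{11}^{1}\,\partial/\partial y_{1}^{j}$, and in both cases the connection terms cancel the bracket, so the temporal-horizontal and temporal-vertical torsions are zero. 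Finally, the vertical part of the mixed horizontal-vertical torsion, namely $\partial N_{(1)i}^{(r)}/\partial y_{1}^{j}-L_{ij}^{r}$, vanishes because of the identity $\partial N_{(1)i}^{(r)}/\partial y_{1}^{j}=n\sigma_{r}\delta_{i}^{r}\delta_{j}^{r}=L_{ij}^{r}$, which simply expresses that the canonical nonlinear connection (\ref{nlc-B-M}) is the one generated by the Cartan coefficients $L_{jk}^{i}$.

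It then remains to evaluate the two surviving d-tensors. The spatial-spatial vertical torsion is the curvature of the nonlinear connection, $R_{(1)ij}^{(r)}=\delta N_{(1)i}^{(r)}/\delta x^{j}-\delta N_{(1)j}^{(r)}/\delta x^{i}$; inserting $N_{(1)i}^{(r)}=n\sigma_{r}y_{1}^{r}\delta_{i}^{r}$ and using $\delta/\delta x^{j}=\partial/\partial x^{j}-n\sigma_{j}y_{1}^{j}\,\partial/\partial y_{1}^{j}$ produces the second-derivative terms $n\delta_{i}^{r}\sigma_{rj}y_{1}^{r}$ and $n\delta_{j}^{r}\sigma_{ri}y_{1}^{r}$ together with terms quadratic in the $\sigma_{i}$; the quadratic terms are symmetric in $i,j$ and therefore cancel under the antisymmetrization, leaving $R_{(1)ij}^{(r)}=n(\delta_{i}^{r}\sigma_{rj}-\delta_{j}^{r}\sigma_{ri})y_{1}^{r}$. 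The horizontal part of the mixed torsion coincides, in this Cartan framework, with the vertical coefficient itself, $P_{i(j)}^{r(1)}=C_{i(j)}^{r(1)}$, so by (\ref{Cartan-can-connect}) it equals $\mathtt{C}_{ij}^{r}\cdot y_{1}^{r}/(y_{1}^{i}y_{1}^{j})$, which is the asserted expression.

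The only genuinely delicate step is the curvature computation for $R_{(1)ij}^{(r)}$: one has to keep the extra vertical contributions coming from the $-n\sigma_{j}y_{1}^{j}\,\partial/\partial y_{1}^{j}$ part of $\delta/\delta x^{j}$ and check that the resulting $\sigma_{i}\sigma_{j}$ terms cancel exactly after antisymmetrizing in $i,j$ --- a cancellation that is transparent here because those terms survive only when $i=j=r$, where the antisymmetry forces them to zero. All the remaining verifications (the symmetries of $L$ and $C$, the vanishing of the temporal brackets, and the identity $\partial N/\partial y=L$) are immediate once the no-sum Kronecker bookkeeping of (\ref{nlc-B-M}) and (\ref{Cartan-can-connect}) is carried out carefully.
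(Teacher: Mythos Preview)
Your proposal is correct and follows essentially the same approach as the paper: invoke the general list of eight torsion d-tensors from \cite{Balan-Neagu}, verify that six of them vanish for the specific coefficients of $C\overset{\ast}{\Gamma}$, and identify the two survivors via the defining formulas $R_{(1)ij}^{(r)}=\delta N_{(1)i}^{(r)}/\delta x^{j}-\delta N_{(1)j}^{(r)}/\delta x^{i}$ and $P_{i(j)}^{r(1)}=C_{i(j)}^{r(1)}$. The paper's proof is terser --- it simply asserts the vanishing of the other six and records the two definitions --- whereas you have spelled out the symmetry arguments for $T$ and $S$, the bracket computations for the temporal pieces, the identity $\partial N_{(1)i}^{(r)}/\partial y_{1}^{j}=L_{ij}^{r}$, and the cancellation of the $\sigma_{i}\sigma_{j}$ terms in $R_{(1)ij}^{(r)}$; this is added detail rather than a different route.
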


\begin{proof}
Generally, an $h$\textit{-normal }$\Gamma $\textit{-linear connection} on
the $1$-jet space $J^{1}(\mathbb{R},M^{n})$ has \textit{eight} effective
local d-tensors of torsion (for more details, see \cite{Balan-Neagu}). For
the Cartan canonical connection (\ref{Cartan-can-connect}) these reduce only
to \textit{two} (the other six are zero):%
\begin{equation*}
R_{(1)ij}^{(r)}\overset{def}{=}{\dfrac{\delta N_{(1)i}^{(r)}}{\delta x^{j}}}-%
{\dfrac{\delta N_{(1)j}^{(r)}}{\delta x^{i}}},\qquad P_{i(j)}^{r(1)}\overset{%
def}{=}C_{i(j)}^{r(1)}.
\end{equation*}
\end{proof}

\begin{proposition}
The Cartan canonical connection of the $(t,x)$-conformal deformed Berwald-Mo%
\'{o}r metric (\ref{rheon-B-M}) has \textbf{three} effective local curvature
d-tensors:%
\begin{equation*}
\begin{array}{c}
\medskip {R_{ijk}^{l}={\dfrac{\partial L_{ij}^{l}}{\partial x^{k}}}-{\dfrac{%
\partial L_{ik}^{l}}{\partial x^{j}}}%
+L_{ij}^{r}L_{rk}^{l}-L_{ik}^{r}L_{rj}^{l}+C_{i(r)}^{l(1)}R_{(1)jk}^{(r)},%
\qquad}P_{ij(k)}^{l\text{ }(1)}={-C_{i(k)|j}^{l(1)},} \\ 
S_{i(j)(k)}^{l(1)(1)}\overset{def}{=}{{\dfrac{\partial C_{i(j)}^{l(1)}}{%
\partial y_{1}^{k}}}-{\dfrac{\partial C_{i(k)}^{l(1)}}{\partial y_{1}^{j}}}%
+C_{i(j)}^{r(1)}C_{r(k)}^{l(1)}-C_{i(k)}^{r(1)}C_{r(j)}^{l(1)}.}%
\end{array}%
\end{equation*}
\end{proposition}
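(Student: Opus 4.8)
The plan is to mirror the reduction carried out in the torsion Proposition above. The general theory of $h$-normal $\Gamma$-linear connections on $J^{1}(\mathbb{R},M^{n})$ (see \cite{Balan-Neagu}) attaches to the connection a fixed family of \emph{five} local curvature d-tensors, one for each admissible antisymmetric pair of directions built from the adapted frame $\left\{ \delta/\delta t,\ \delta/\delta x^{k},\ \partial/\partial y_{1}^{k}\right\}$: the pairs (h-space, h-space), (h-space, v-vertical) and (v-vertical, v-vertical) produce $R_{ijk}^{l}$, $P_{ij(k)}^{l(1)}$ and $S_{i(j)(k)}^{l(1)(1)}$, while the two pairs (h-time, h-space) and (h-time, v-vertical) produce two further d-tensors; the pair (h-time, h-time) yields nothing, by antisymmetry over the one-dimensional base $\mathbb{R}$. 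First I would write out all five expressions from the general formulas of \cite{Balan-Neagu} in terms of the coefficients $\left( \text{\textsc{k}}_{11}^{1},\ G_{j1}^{k},\ L_{jk}^{i},\ C_{j(k)}^{i(1)}\right)$ of (\ref{Cartan-can-connect}).

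The decisive structural inputs are the two already-established facts that $G_{j1}^{k}=0$ identically and that $\text{\textsc{k}}_{11}^{1}=\text{\textsc{k}}_{11}^{1}(t)$ depends on $t$ alone. The latter gives $\partial \text{\textsc{k}}_{11}^{1}/\partial x^{j}=0$ and $\partial \text{\textsc{k}}_{11}^{1}/\partial y_{1}^{j}=0$, so that the adapted derivatives of $\text{\textsc{k}}_{11}^{1}$ collapse to elementary expressions. Every term of the two curvature d-tensors indexed by the (h-time, h-space) and (h-time, v-vertical) pairs is built exclusively from $G_{j1}^{k}$, from its derivatives, and from the $x$- and $y$-derivatives of $\text{\textsc{k}}_{11}^{1}$; each such ingredient vanishes here, so both of these curvature d-tensors are zero. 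This is precisely the curvature analogue of the ``six-of-eight vanish'' step of the torsion Proposition, and I expect it to absorb most of the bookkeeping.

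It then remains to simplify the three survivors into the stated shapes. Since $L_{jk}^{i}=n\delta_{j}^{i}\delta_{k}^{i}\sigma_{i}$ carries no $y_{1}$-dependence, the adapted derivative $\delta L_{ij}^{l}/\delta x^{k}$ reduces to $\partial L_{ij}^{l}/\partial x^{k}$, and the general $hh$-formula collapses to the displayed $R_{ijk}^{l}$; here the coupling term $C_{i(r)}^{l(1)}R_{(1)jk}^{(r)}$ must be retained, since the torsion $R_{(1)jk}^{(r)}$ computed in the torsion Proposition is genuinely nonzero ($\sigma$ being non-constant). For the mixed curvature, the same $y_{1}$-independence of $L$ kills the inhomogeneous term $\partial L_{ij}^{l}/\partial y_{1}^{k}$ and the vanishing of $G_{j1}^{k}$ removes the remaining pieces, leaving exactly $P_{ij(k)}^{l(1)}=-C_{i(k)|j}^{l(1)}$, with the h-covariant derivative already introduced in the Remark above. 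Finally, $S_{i(j)(k)}^{l(1)(1)}$ is read off directly from its definition, no simplification being required.

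The genuinely content-bearing step, as opposed to routine cancellation, is confirming that the coupling $C_{i(r)}^{l(1)}R_{(1)jk}^{(r)}$ persists in $R_{ijk}^{l}$. Accordingly, I expect the main obstacle to be the careful check that no cross term hidden in the $h$-normal identification of the time, space and vertical connection blocks re-introduces a $\text{\textsc{k}}_{11}^{1}$- or $G_{j1}^{k}$-dependence into the three surviving d-tensors; once that is ruled out, the proposition follows.
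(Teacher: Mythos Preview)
Your overall plan is the paper's own: quote the five general curvature d-tensors from \cite{Balan-Neagu}, kill the two attached to the h-time direction via $G_{j1}^{k}=0$ and the $t$-only dependence of $\text{\textsc{k}}_{11}^{1}$, then simplify the three survivors. Your handling of $R_{ijk}^{l}$ (using that $L_{jk}^{i}$ is $y_{1}$-independent, so $\delta/\delta x^{k}=\partial/\partial x^{k}$ on it) and of $S_{i(j)(k)}^{l(1)(1)}$ matches the paper exactly.

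There is one concrete misstep in your treatment of $P_{ij(k)}^{l\;(1)}$. The general formula the paper uses is
\[
P_{ij(k)}^{l\;(1)}=\dfrac{\partial L_{ij}^{l}}{\partial y_{1}^{k}}-C_{i(k)|j}^{l(1)}+C_{i(r)}^{l(1)}\,P_{(1)j(k)}^{(r)\;(1)},
\]
and the coupling term here carries $P_{(1)j(k)}^{(r)\;(1)}=\partial N_{(1)j}^{(r)}/\partial y_{1}^{k}-L_{jk}^{r}$, which has nothing to do with $G_{j1}^{k}$. Its vanishing is \emph{not} a consequence of $G_{j1}^{k}=0$; it is the specific identity
\[
\dfrac{\partial N_{(1)j}^{(r)}}{\partial y_{1}^{k}}=\dfrac{\partial}{\partial y_{1}^{k}}\bigl(n\sigma_{r}\,y_{1}^{r}\,\delta_{j}^{r}\bigr)=n\sigma_{r}\,\delta_{j}^{r}\,\delta_{k}^{r}=L_{jk}^{r},
\]
which must be checked directly from (\ref{nlc-B-M}) and (\ref{Cartan-can-connect}). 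This is precisely the equality the paper singles out. Once you replace ``the vanishing of $G_{j1}^{k}$ removes the remaining pieces'' by this computation, your argument is complete and coincides with the paper's.
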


\begin{proof}
Generally, an $h$\textit{-normal }$\Gamma $\textit{-linear connection} on
the $1$-jet space $J^{1}(\mathbb{R},M^{n})$ has \textit{five} effective
local d-tensors of curvature (for more details, see \cite{Balan-Neagu}). For
the Cartan canonical connection (\ref{Cartan-can-connect}) these reduce only
to \textit{three} (the other two are zero); these are%
\begin{equation*}
\begin{array}{l}
\medskip {R_{ijk}^{l}\overset{def}{=}{\dfrac{\delta L_{ij}^{l}}{\delta x^{k}}%
}-{\dfrac{\delta L_{ik}^{l}}{\delta x^{j}}}%
+L_{ij}^{r}L_{rk}^{l}-L_{ik}^{r}L_{rj}^{l}+C_{i(r)}^{l(1)}R_{(1)jk}^{(r)},}
\\ 
\medskip {P_{ij(k)}^{l\;\;(1)}\overset{def}{=}{\dfrac{\partial L_{ij}^{l}}{%
\partial y_{1}^{k}}}-C_{i(k)|j}^{l(1)}+C_{i(r)}^{l(1)}P_{(1)j(k)}^{(r)\;%
\;(1)}=-C_{i(k)|j}^{l(1)},} \\ 
S_{i(j)(k)}^{l(1)(1)}\overset{def}{=}{{\dfrac{\partial C_{i(j)}^{l(1)}}{%
\partial y_{1}^{k}}}-{\dfrac{\partial C_{i(k)}^{l(1)}}{\partial y_{1}^{j}}}%
+C_{i(j)}^{r(1)}C_{r(k)}^{l(1)}-C_{i(k)}^{r(1)}C_{r(j)}^{l(1)},}%
\end{array}%
\end{equation*}%
where we used the equality%
\begin{equation*}
{P_{(1)j(k)}^{(r)\text{ }(1)}\overset{def}{=}{\dfrac{\partial N_{(1)j}^{(r)}%
}{\partial y_{1}^{k}}}-L_{jk}^{r}=0}.
\end{equation*}
\end{proof}

\section{Field-like geometrical models associated to the $(t,x)$-conformal
deformation of the Berwald-Mo\'{o}r metric}

\subsection{Gravitational-like geometrical model}

\hspace{5mm}The $(t,x)$-conformal deformed Berwald-Mo\'{o}r metric (\ref%
{rheon-B-M}) produces on the $1$-jet space $J^{1}(\mathbb{R},M^{n})$ the
adapted metrical d-tensor (sum by $i$ and $j$)%
\begin{equation}
\mathbf{G}=h_{11}dt\otimes dt+\overset{\ast }{g}_{ij}dx^{i}\otimes
dx^{j}+h^{11}\overset{\ast }{g}_{ij}\delta y_{1}^{i}\otimes \delta y_{1}^{j},
\label{gravit-pot-B-M}
\end{equation}%
where $\overset{\ast }{g}_{ij}$ is given by (\ref{BM-fdt-metr-d-tensor}),
and we have%
\begin{equation*}
\delta y_{1}^{i}=dy_{1}^{i}-\text{\textsc{k}}_{11}^{1}y_{1}^{i}dt+n\sigma
_{i}y_{1}^{i}dx^{i}\text{ (no sum by }i\text{).}
\end{equation*}

From an abstract physical point of view, the metrical d-tensor (\ref%
{gravit-pot-B-M}) may be regarded as a \textit{\textquotedblleft
non-isotropic gravitational potential\textquotedblright }. In our
geometric-physical approach, one postulates that the non-isotropic
gravitational potential $\mathbf{G}$ is governed by the following \textit{%
geometrical Einstein-like equations:}%
\begin{equation}
\text{Ric }(C\overset{\ast }{\Gamma })-\frac{\text{Sc }(C\overset{\ast }{%
\Gamma })}{2}\mathbf{G}\mathbb{=}\mathcal{KT},  \label{Einstein-eq-global}
\end{equation}%
where

\begin{itemize}
\item[$\blacklozenge $] Ric $(C\overset{\ast }{\Gamma })$ is the \textit{%
Ricci d-tensor} associated to the Cartan canonical linear connection (\ref%
{Cartan-can-connect});

\item[$\blacklozenge $] Sc $(C\overset{\ast }{\Gamma })$ is the \textit{%
scalar curvature};

\item[$\blacklozenge $] $\mathcal{K}$ is the \textit{Einstein constant} and $%
\mathcal{T}$ is the intrinsic \textit{non-isotropic stress-energy d-tensor
of matter}.
\end{itemize}

Therefore, using the adapted basis of vector fields (\ref{a-b-v}), we can
locally describe the global geometrical Einstein-like equations (\ref%
{Einstein-eq-global}). Consequently, some direct computations lead to:

\begin{lemma}
The Ricci tensor of the Cartan canonical connection $C\overset{\ast }{\Gamma 
}$ of the $(t,x)$-conformal deformed Berwald-Mo\'{o}r metric (\ref{rheon-B-M}%
) has the following \textbf{two} effective local Ricci d-tensors (no sum by $%
i$, $j$, $k$ or $l$):%
\begin{equation}
\begin{array}{l}
\bigskip R_{ij}=\left\{ 
\begin{array}{ll}
-\sigma _{ij}-\underset{\overset{m=1}{m\neq j}}{\overset{n}{\sum }}\sigma
_{jm}\dfrac{y_{1}^{m}}{y_{1}^{i}}, & i\neq j\medskip \\ 
0, & i=j,%
\end{array}%
\right. \\ 
S_{(i)(j)}^{(1)(1)}=\left[ \dfrac{2}{n^{2}}-\dfrac{1}{n}+\left( 1-\dfrac{2}{n%
}\right) \delta _{ij}\right] \cdot \dfrac{1}{y_{1}^{i}y_{1}^{j}}{.}%
\end{array}
\label{Ricci-local}
\end{equation}
\end{lemma}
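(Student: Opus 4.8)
The plan is to obtain each Ricci d-tensor as the appropriate trace of the three curvature d-tensors produced in the preceding Proposition, following the general contraction recipe of \cite{Balan-Neagu}. Concretely, the horizontal Ricci d-tensor is the trace $R_{ij}:=R_{ijm}^{m}$ (sum over $m$) of the $hhh$-curvature, while the vertical Ricci d-tensor is $S_{(i)(j)}^{(1)(1)}:=S_{i(j)(m)}^{m(1)(1)}$ (sum over $m$). For the present connection only these two traces are effective, the mixed traces of $P_{ij(k)}^{l\,(1)}$ vanishing on account of the Cartan-tensor divergence identities recorded in (\ref{equalitie-C}); this is why the statement lists exactly two local Ricci d-tensors. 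The plan is then to evaluate these two traces directly.

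For the horizontal piece I would substitute $L_{jk}^{i}=n\delta_{j}^{i}\delta_{k}^{i}\sigma_{i}$ and the torsion $R_{(1)jk}^{(r)}$ into the definition of $R_{ijk}^{l}$ and set $l=k=m$. The two quadratic terms $L_{ij}^{r}L_{rm}^{m}$ and $-L_{im}^{r}L_{rj}^{m}$ are each supported only on the fully diagonal locus $i=j=m=r$ and carry opposite signs, so they cancel for every pair $(i,j)$. The two derivative terms contribute $n\delta_{ij}\sigma_{ij}$ and $-n\sigma_{ij}$. The genuinely non-trivial contribution is the curvature term $\sum_{m}C_{i(r)}^{m(1)}R_{(1)jm}^{(r)}$; after inserting $C_{i(r)}^{m(1)}=\mathtt{C}_{ir}^{m}\,y_{1}^{m}/(y_{1}^{i}y_{1}^{r})$, summing the $\delta_{j}^{r}$ and $\delta_{m}^{r}$ selections, and using the symmetry $\sigma_{jm}=\sigma_{mj}$, this collapses to $(n/y_{1}^{i})\sum_{m}(\mathtt{C}_{ij}^{m}-\mathtt{C}_{im}^{m})\,y_{1}^{m}\sigma_{jm}$.

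The decisive step is to evaluate the difference $\mathtt{C}_{ij}^{m}-\mathtt{C}_{im}^{m}$ by splitting the $m$-sum into the cases $m=i$, $m=j$ and $m\neq i,j$. For $i\neq j$ this difference equals $1-2/n$ when $m=i$, vanishes when $m=j$, and equals $-1/n$ otherwise; combining the resulting $(n-2)\sigma_{ij}$ with the derivative term $-n\sigma_{ij}$ produces $-2\sigma_{ij}$ and, together with the off-diagonal part of the sum, exactly the stated expression $-\sigma_{ij}-\sum_{m\neq j}\sigma_{jm}\,y_{1}^{m}/y_{1}^{i}$. For $i=j$ one checks $\mathtt{C}_{ii}^{m}=\mathtt{C}_{im}^{m}$ for all $m$, so the curvature term vanishes and the derivative and quadratic terms cancel, giving $R_{ii}=0$. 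I expect this $\delta$-bookkeeping---recognizing the telescoping in $\mathtt{C}_{ij}^{m}-\mathtt{C}_{im}^{m}$ that removes the spurious factor $n$---to be the main obstacle.

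For the vertical Ricci d-tensor I would contract $S_{i(j)(k)}^{l(1)(1)}$ over $l=k=m$. Two of the four resulting terms drop immediately by the trace identity $C_{i(m)}^{m(1)}=0$ from (\ref{equalitie-C}), leaving $S_{(i)(j)}^{(1)(1)}=\sum_{m}\partial C_{i(j)}^{m(1)}/\partial y_{1}^{m}-\sum_{m,r}C_{i(m)}^{r(1)}C_{r(j)}^{m(1)}$. This is a purely vertical computation: substituting $C_{i(j)}^{m(1)}=\mathtt{C}_{ij}^{m}\,y_{1}^{m}/(y_{1}^{i}y_{1}^{j})$, differentiating in $y_{1}^{m}$, and contracting the two $\mathtt{C}$-factors yields, after collecting the $\delta_{ij}$-dependence, the claimed $\left[2/n^{2}-1/n+(1-2/n)\delta_{ij}\right]/(y_{1}^{i}y_{1}^{j})$.
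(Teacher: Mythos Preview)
Your proposal is correct and follows the same route as the paper: both obtain $R_{ij}=R_{ijm}^{m}$ and $S_{(i)(j)}^{(1)(1)}=S_{i(j)(m)}^{m(1)(1)}$, note that the remaining four Ricci d-tensors vanish, and reduce the vertical trace via $C_{i(m)}^{m(1)}=0$ to $\partial C_{i(j)}^{m(1)}/\partial y_{1}^{m}-C_{i(m)}^{r(1)}C_{r(j)}^{m(1)}$. The paper's proof is only a pointer to these definitions and to \cite{Balan-Neagu}, leaving the ``direct computations'' implicit, whereas you actually carry them out; your case-splitting on $\mathtt{C}_{ij}^{m}-\mathtt{C}_{im}^{m}$ and the resulting identification $-2\sigma_{ij}-\sum_{m\neq i,j}\sigma_{jm}y_{1}^{m}/y_{1}^{i}=-\sigma_{ij}-\sum_{m\neq j}\sigma_{jm}y_{1}^{m}/y_{1}^{i}$ are exactly what is needed.
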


\begin{proof}
Generally, the Ricci tensor of the Cartan canonical connection $C\Gamma $
produced by an arbitrary jet Lagrangian function is determined by \textit{six%
} effective local Ricci d-tensors (for more details, see \cite{Balan-Neagu}%
). For our particular Cartan canonical connection (\ref{Cartan-can-connect})
these reduce only to the following \textit{two} (the other four are zero):%
\begin{equation*}
\begin{array}{lll}
\bigskip R_{ij} & \overset{def}{=} & R_{ijm}^{m}={{\dfrac{\partial L_{ij}^{m}%
}{\partial x^{m}}}-{\dfrac{\partial L_{im}^{m}}{\partial x^{j}}}%
+L_{ij}^{r}L_{rm}^{m}-L_{im}^{r}L_{rj}^{m}+C_{i(r)}^{m(1)}R_{(1)jm}^{(r)},}
\\ 
\bigskip S_{(i)(j)}^{(1)(1)} & \overset{def}{=} & S_{i(j)(m)}^{m(1)(1)}={{%
\dfrac{\partial C_{i(j)}^{m(1)}}{\partial y_{1}^{m}}}-{\dfrac{\partial
C_{i(m)}^{m(1)}}{\partial y_{1}^{j}}}%
+C_{i(j)}^{r(1)}C_{r(m)}^{m(1)}-C_{i(m)}^{r(1)}C_{r(j)}^{m(1)}=} \\ 
& {=} & {{\dfrac{\partial C_{i(j)}^{m(1)}}{\partial y_{1}^{m}}}%
-C_{i(m)}^{r(1)}C_{r(j)}^{m(1)},}%
\end{array}%
\end{equation*}%
with sum by $r$ and $m$.
\end{proof}

\begin{lemma}
The scalar curvature of the Cartan canonical connection $C\overset{\ast }{%
\Gamma }$ of the $(t,x)$-conformal deformed Berwald-Mo\'{o}r metric (\ref%
{rheon-B-M}) has the value%
\begin{equation*}
\text{\emph{Sc} }(C\overset{\ast }{\Gamma })=-e^{-2\sigma }G_{1[n]}^{-2/n}%
\left[ 4nY_{11}+\left( n^{2}-3n+2\right) h_{11}\right] ,
\end{equation*}%
where%
\begin{equation*}
Y_{11}=\underset{\overset{p,q=1}{p<q}}{\overset{n}{\sum }}\sigma
_{pq}y_{1}^{p}y_{1}^{q}.
\end{equation*}
\end{lemma}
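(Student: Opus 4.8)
The plan is to compute the scalar curvature by contracting the two effective Ricci d-tensors from the previous Lemma against the inverse fundamental metric $\overset{\ast }{g}\;^{\!\!ij}$ given in (\ref{g-sus-(jk)}). Recall that the scalar curvature of the Cartan canonical connection is assembled as
\begin{equation*}
\text{Sc }(C\overset{\ast }{\Gamma })=\overset{\ast }{g}\;^{\!\!ij}R_{ij}+h_{11}\overset{\ast }{g}\;^{\!\!ij}S_{(i)(j)}^{(1)(1)},
\end{equation*}
so the computation splits into two independent sums, one horizontal and one vertical, which I would handle separately before adding them.

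For the vertical piece, I would substitute the closed form $S_{(i)(j)}^{(1)(1)}=\left[\frac{2}{n^{2}}-\frac{1}{n}+\left(1-\frac{2}{n}\right)\delta_{ij}\right]\frac{1}{y_{1}^{i}y_{1}^{j}}$ together with $\overset{\ast }{g}\;^{\!\!ij}=e^{-2\sigma}(2-n\delta^{ij})G_{1[n]}^{-2/n}y_{1}^{i}y_{1}^{j}$. The crucial observation is that the factors $y_{1}^{i}y_{1}^{j}$ in the inverse metric cancel exactly against the $1/(y_{1}^{i}y_{1}^{j})$ in $S$, leaving a purely combinatorial double sum over the bracket $(2-n\delta^{ij})$ times the bracket from $S$. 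Expanding this and splitting into the diagonal part ($i=j$, giving $n$ terms) and the off-diagonal part ($i\neq j$, giving $n^{2}-n$ terms) reduces everything to counting; I expect this to collapse to a constant multiple of $e^{-2\sigma}G_{1[n]}^{-2/n}$, which after multiplying by $h_{11}$ should produce the $(n^{2}-3n+2)h_{11}$ contribution (note $n^{2}-3n+2=(n-1)(n-2)$).

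For the horizontal piece, I would contract $\overset{\ast }{g}\;^{\!\!ij}$ with $R_{ij}$, using that $R_{ij}$ vanishes on the diagonal and equals $-\sigma_{ij}-\sum_{m\neq j}\sigma_{jm}\frac{y_{1}^{m}}{y_{1}^{i}}$ off-diagonal. Since only off-diagonal $(i,j)$ contribute, the factor $(2-n\delta^{ij})$ reduces to the constant $2$ there, and after the $y_{1}^{i}y_{1}^{j}$ from the inverse metric multiplies through, I would reorganize the resulting triple-indexed sum into the symmetric quantity $Y_{11}=\sum_{p<q}\sigma_{pq}y_{1}^{p}y_{1}^{q}$. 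The anticipated main obstacle is precisely this bookkeeping: the two terms in $R_{ij}$ are structurally different (one carries $\sigma_{ij}$ with $y_{1}^{i}y_{1}^{j}$, the other $\sigma_{jm}$ with $y_{1}^{m}y_{1}^{j}$ after cancellation of $y_{1}^{i}$), so I must carefully relabel dummy indices, exploit the symmetry $\sigma_{pq}=\sigma_{qp}$, and track which pairs survive in order to recombine all contributions into a single clean multiple of $Y_{11}$. Once both pieces yield the coefficients $-4n$ on $Y_{11}$ and $-(n^{2}-3n+2)$ on $h_{11}$, factoring out $-e^{-2\sigma}G_{1[n]}^{-2/n}$ gives the stated formula.
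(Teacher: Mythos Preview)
Your proposal is correct and follows exactly the paper's approach: the paper's proof consists solely of recording the scalar-curvature formula $\text{Sc}(C\overset{\ast}{\Gamma})=\overset{\ast}{g}\;^{\!\!pq}R_{pq}+h_{11}\overset{\ast}{g}\;^{\!\!pq}S_{(p)(q)}^{(1)(1)}$ and leaving the contraction implicit, so you are actually spelling out more of the computation than the paper itself does.
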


\begin{proof}
The scalar curvature of the Cartan canonical connection (\ref%
{Cartan-can-connect}) is given by the formula (for more details, see \cite%
{Balan-Neagu})%
\begin{equation*}
\text{Sc }(C\overset{\ast }{\Gamma })=\overset{\ast }{g}\;^{\!%
\!pq}R_{pq}+h_{11}\overset{\ast }{g}\;^{\!\!pq}S_{(p)(q)}^{(1)(1)}.
\end{equation*}
\end{proof}

The local description in the adapted basis of vector fields (\ref{a-b-v}) of
the global geometrical Einstein-like equations (\ref{Einstein-eq-global}) is
given by (for more details, see \cite{Balan-Neagu}):

\begin{proposition}
The \textbf{geometrical Einstein-like equations} produced by the $(t,x)$%
-conformal deformed Berwald-Mo\'{o}r metric (\ref{rheon-B-M}) are locally
described by:%
\begin{equation}
\left\{ 
\begin{array}{l}
\medskip e^{-2\sigma }G_{1[n]}^{-2/n}\left[ 2nY_{11}+\dfrac{n^{2}-3n+2}{2}%
h_{11}\right] h_{11}=\mathcal{KT}_{11} \\ 
\medskip R_{ij}+e^{-2\sigma }G_{1[n]}^{-2/n}\left[ 2nY_{11}+\dfrac{n^{2}-3n+2%
}{2}h_{11}\right] \overset{\ast }{g}_{ij}=\mathcal{KT}_{ij} \\ 
S_{(i)(j)}^{(1)(1)}+e^{-2\sigma }G_{1[n]}^{-2/n}\left[ 2nY_{11}+\dfrac{%
n^{2}-3n+2}{2}h_{11}\right] h^{11}\overset{\ast }{g}_{ij}=\mathcal{KT}%
_{(i)(j)}^{(1)(1)}\medskip \\ 
\begin{array}{lll}
0=\mathcal{T}_{1i}, & 0=\mathcal{T}_{i1}, & 0=\mathcal{T}_{(i)1}^{(1)}%
\medskip \\ 
0=\mathcal{T}_{1(i)}^{\text{ }(1)}, & 0=\mathcal{T}_{i(j)}^{\text{ }(1)}, & 
0=\mathcal{T}_{(i)j}^{(1)}.%
\end{array}%
\end{array}%
\right.  \label{E-1}
\end{equation}
\end{proposition}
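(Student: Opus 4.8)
The plan is to expand the global tensorial identity (\ref{Einstein-eq-global}) in the adapted cobasis $\{dt,\,dx^{i},\,\delta y_{1}^{i}\}$ dual to (\ref{a-b-v}), collecting coefficients block by block. The two structural inputs that make this routine are: first, the adapted metrical d-tensor $\mathbf{G}$ of (\ref{gravit-pot-B-M}) is block-diagonal with respect to the temporal ($dt$), horizontal ($dx^{i}$) and vertical ($\delta y_{1}^{i}$) distributions, with respective blocks $h_{11}$, $\overset{\ast}{g}_{ij}$ and $h^{11}\overset{\ast}{g}_{ij}$; second, by Lemma \ref{} the Ricci tensor of $C\overset{\ast}{\Gamma}$ has only the two nonzero adapted components $R_{ij}$ and $S_{(i)(j)}^{(1)(1)}$, which sit in the horizontal-horizontal and vertical-vertical blocks. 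Hence the Einstein d-tensor $\text{Ric}(C\overset{\ast}{\Gamma})-\tfrac{1}{2}\text{Sc}(C\overset{\ast}{\Gamma})\mathbf{G}$ decomposes into the same blocks, and matching it against $\mathcal{KT}$ yields one scalar equation per block.

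First I would record the common scalar-curvature factor. Using the value of $\text{Sc}(C\overset{\ast}{\Gamma})$ from the preceding Lemma and writing $-\tfrac{1}{2}\text{Sc}(C\overset{\ast}{\Gamma})=e^{-2\sigma}G_{1[n]}^{-2/n}\big[2nY_{11}+\tfrac{n^{2}-3n+2}{2}h_{11}\big]$, this quantity multiplies each block of $\mathbf{G}$. In the temporal block the Ricci contribution is absent, so the Einstein d-tensor reduces to $\big(-\tfrac{1}{2}\text{Sc}\big)h_{11}$, giving the first equation of (\ref{E-1}). In the horizontal block I would add $R_{ij}$ to $\big(-\tfrac{1}{2}\text{Sc}\big)\overset{\ast}{g}_{ij}$, producing the second equation; in the vertical block I would add $S_{(i)(j)}^{(1)(1)}$ to $\big(-\tfrac{1}{2}\text{Sc}\big)h^{11}\overset{\ast}{g}_{ij}$, producing the third. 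Each identification is read off directly from the corresponding component of $\mathcal{KT}$.

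Finally I would treat the six mixed blocks (temporal-horizontal, temporal-vertical and horizontal-vertical, in both index orders). Since $\mathbf{G}$ has no off-block components and the Ricci tensor vanishes there by the cited Lemma, the Einstein d-tensor is identically zero in each mixed block; because $\mathcal{K}\neq 0$, the matching conditions force the six stress-energy constraints $\mathcal{T}_{1i}=\mathcal{T}_{i1}=\mathcal{T}_{(i)1}^{(1)}=\mathcal{T}_{1(i)}^{(1)}=\mathcal{T}_{i(j)}^{(1)}=\mathcal{T}_{(i)j}^{(1)}=0$.

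The computation is essentially bookkeeping once the two Lemmas are in hand, so there is no genuine analytic obstacle; the points demanding care are the factor $\tfrac{1}{2}$ in passing from $\text{Sc}(C\overset{\ast}{\Gamma})$ to the coefficient $2nY_{11}+\tfrac{n^{2}-3n+2}{2}h_{11}$, and the correct placement of the metric factors ($\overset{\ast}{g}_{ij}$ in the horizontal block versus $h^{11}\overset{\ast}{g}_{ij}$ in the vertical block, exactly as in (\ref{gravit-pot-B-M})). I would double-check these against the general local form of the geometrical Einstein-like equations in \cite{Balan-Neagu}, to confirm that the block coefficients and the distinct roles of $h_{11}$ and $h^{11}$ are assigned consistently.
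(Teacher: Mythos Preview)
Your proposal is correct and follows essentially the same approach as the paper: the paper offers no detailed argument beyond invoking the general local form of the Einstein-like equations from \cite{Balan-Neagu} and substituting the Ricci and scalar-curvature data computed in the two preceding Lemmas. Your block-by-block expansion of $\text{Ric}(C\overset{\ast}{\Gamma})-\tfrac{1}{2}\,\text{Sc}(C\overset{\ast}{\Gamma})\,\mathbf{G}=\mathcal{K}\mathcal{T}$ in the adapted cobasis is exactly this substitution carried out explicitly, and the checks you flag (the factor $\tfrac{1}{2}$ and the placement of $h_{11}$ versus $h^{11}$) are the only places where a slip could occur.
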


\begin{corollary}
The non-isotropic stress-energy d-tensor of matter $\mathcal{T}$ satisfies
the following \textbf{geometrical conservation laws} (sum by $m$):%
\begin{equation*}
\left\{ 
\begin{array}{l}
\bigskip \mathcal{T}_{1/1}^{1}+\mathcal{T}_{1|m}^{m}+\mathcal{T}%
_{(1)1}^{(m)}|_{(m)}^{(1)}=0 \\ 
\mathcal{T}_{i/1}^{1}+\mathcal{T}_{i|m}^{m}+\mathcal{T}%
_{(1)i}^{(m)}|_{(m)}^{(1)}=E_{i|m}^{m}\medskip \\ 
\mathcal{T}_{\text{ \ }(i)/1}^{1(1)}+\mathcal{T}_{\text{ \ }(i)|m}^{m(1)}+%
\mathcal{T}_{(1)(i)}^{(m)(1)}|_{(m)}^{(1)}=\dfrac{2e^{-2\sigma
}G_{1[n]}^{-2/n}}{\mathcal{K}}\cdot \left[ n\dfrac{\partial Y_{11}}{\partial
y_{1}^{i}}-2\dfrac{Y_{11}}{y_{1}^{i}}\right] ,%
\end{array}%
\right.
\end{equation*}%
where (sum by $r$):$\medskip $

$\bigskip \mathcal{T}_{1}^{1}\overset{def}{=}h^{11}\mathcal{T}_{11}=\mathcal{%
K}^{-1}e^{-2\sigma }G_{1[n]}^{-2/n}\left[ 2nY_{11}+\dfrac{n^{2}-3n+2}{2}%
h_{11}\right] ,$

$\bigskip \mathcal{T}_{1}^{m}\overset{def}{=}\overset{\ast }{g}\;^{\!\!mr}%
\mathcal{T}_{r1}=0,\quad \mathcal{T}_{(1)1}^{(m)}\overset{def}{=}h_{11}%
\overset{\ast }{g}\;^{\!\!mr}\mathcal{T}_{(r)1}^{(1)}=0,\quad \mathcal{T}%
_{i}^{1}\overset{def}{=}h^{11}\mathcal{T}_{1i}=0,$

$\bigskip \mathcal{T}_{i}^{m}\overset{def}{=}\overset{\ast }{g}\;^{\!\!mr}%
\mathcal{T}_{ri}:=E_{i}^{m}=\mathcal{K}^{-1}\left[ \dfrac{{}}{{}}\overset{%
\ast }{g}\;^{\!\!mr}R_{ri}+\right. $

$\bigskip \qquad\left. +e^{-2\sigma }G_{1[n]}^{-2/n}\left( 2nY_{11}+\dfrac{%
n^{2}-3n+2}{2}h_{11}\right) \delta _{i}^{m}\right] ,$

$\bigskip \mathcal{T}_{(1)i}^{(m)}\overset{def}{=}h_{11}\overset{\ast }{g}%
\;^{\!\!mr}\mathcal{T}_{(r)i}^{(1)}=0,\;\;\mathcal{T}_{\text{ \ }(i)}^{1(1)}%
\overset{def}{=}h^{11}\mathcal{T}_{1(i)}^{\text{ }(1)}=0,\;\;\mathcal{T}_{%
\text{ \ }(i)}^{m(1)}\overset{def}{=}\overset{\ast }{g}\;^{\!\!mr}\mathcal{T}%
_{r(i)}^{\text{ }(1)}=0,$

$\medskip \mathcal{T}_{(1)(i)}^{(m)(1)}\overset{def}{=}h_{11}\overset{\ast }{%
g}\;^{\!\!mr}\mathcal{T}_{(r)(i)}^{(1)(1)}=\dfrac{e^{-2\sigma
}G_{1[n]}^{-2/n}}{\mathcal{K}}\cdot \left[ \dfrac{n-2}{n}h_{11}\dfrac{%
y_{1}^{m}}{y_{1}^{i}}+\right. $

$\bigskip \qquad\left. +\left( 2nY_{11}+\dfrac{n^{2}-5n+6}{2}h_{11}\right)
\delta _{i}^{m}\right] ,$

and we also have (summation by $m$ and $r$, but no sum by $i$)$\bigskip $

$\bigskip\mathcal{T}_{1/1}^{1}=\dfrac{\delta \mathcal{T}_{1}^{1}}{\delta t}%
,\quad\mathcal{T}_{1|m}^{m}\overset{def}{=}\dfrac{\delta \mathcal{T}_{1}^{m}%
}{\delta x^{m}}+\mathcal{T}_{1}^{r}L_{rm}^{m},$

$\bigskip \mathcal{T}_{(1)1}^{(m)}|_{(m)}^{(1)}\overset{def}{=}\dfrac{%
\partial \mathcal{T}_{(1)1}^{(m)}}{\partial y_{1}^{m}}+\mathcal{T}%
_{(1)1}^{(r)}C_{r(m)}^{m(1)}=\dfrac{\partial \mathcal{T}_{(1)1}^{(m)}}{%
\partial y_{1}^{m}},$

$\bigskip \mathcal{T}_{i/1}^{1}\overset{def}{=}\dfrac{\delta \mathcal{T}%
_{i}^{1}}{\delta t}+\mathcal{T}_{i}^{1}\text{\textsc{k}}_{11}^{1}-\mathcal{T}%
_{r}^{1}G_{i1}^{r}=\dfrac{\delta \mathcal{T}_{i}^{1}}{\delta t}+\mathcal{T}%
_{i}^{1}\text{\textsc{k}}_{11}^{1},$

$\bigskip \mathcal{T}_{i|m}^{m}\overset{def}{=}\dfrac{\delta \mathcal{T}%
_{i}^{m}}{\delta x^{m}}+\mathcal{T}_{i}^{r}L_{rm}^{m}-\mathcal{T}%
_{r}^{m}L_{im}^{r}=E_{i|m}^{m}:=\dfrac{\delta E_{i}^{m}}{\delta x^{m}}%
+nE_{i}^{m}\sigma _{m}-nE_{i}^{i}\sigma _{i},$

$\bigskip \mathcal{T}_{(1)i}^{(m)}|_{(m)}^{(1)}\overset{def}{=}\dfrac{%
\partial \mathcal{T}_{(1)i}^{(m)}}{\partial y_{1}^{m}}+\mathcal{T}%
_{(1)i}^{(r)}C_{r(m)}^{m(1)}-\mathcal{T}_{(1)r}^{(m)}C_{i(m)}^{r(1)}=\dfrac{%
\partial \mathcal{T}_{(1)i}^{(m)}}{\partial y_{1}^{m}}-\mathcal{T}%
_{(1)r}^{(m)}C_{i(m)}^{r(1)},$

$\bigskip \mathcal{T}_{\text{ \ }(i)/1}^{1(1)}\overset{def}{=}\dfrac{\delta 
\mathcal{T}_{\text{ \ }(i)}^{1(1)}}{\delta t}+2\mathcal{T}_{\text{ \ }%
(i)}^{1(1)}\text{\textsc{k}}_{11}^{1},\quad\mathcal{T}_{\text{ \ }%
(i)|m}^{m(1)}\overset{def}{=}\dfrac{\delta \mathcal{T}_{\text{ \ }(i)}^{m(1)}%
}{\delta x^{m}}+\mathcal{T}_{\text{ \ }(i)}^{r(1)}L_{rm}^{m}-\mathcal{T}_{%
\text{ \ }(r)}^{m(1)}L_{im}^{r},$

$\mathcal{T}_{(1)(i)}^{(m)(1)}|_{(m)}^{(1)}\overset{def}{=}\dfrac{\partial 
\mathcal{T}_{(1)(i)}^{(m)(1)}}{\partial y_{1}^{m}}+\mathcal{T}%
_{(1)(i)}^{(r)(1)}C_{r(m)}^{m(1)}-\mathcal{T}%
_{(1)(r)}^{(m)(1)}C_{i(m)}^{r(1)}=\dfrac{\partial \mathcal{T}%
_{(1)(i)}^{(m)(1)}}{\partial y_{1}^{m}}.\medskip $
\end{corollary}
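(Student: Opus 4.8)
The plan is to obtain the conservation laws as contracted Bianchi-type identities for the Einstein tensor, transported to $\mathcal{T}$ through the field equations (\ref{E-1}). First I would raise indices in (\ref{E-1}) using $h^{11}$ and the inverse metric $\overset{\ast}{g}\;^{\!\!jk}$ from (\ref{g-sus-(jk)}) to read off the mixed components $\mathcal{T}_1^1$, $\mathcal{T}_i^m=E_i^m$ and $\mathcal{T}_{(1)(i)}^{(m)(1)}$ exactly as displayed in the statement; the remaining mixed components vanish because the corresponding covariant components of $\mathcal{T}$ are zero by the last two rows of (\ref{E-1}). Next I would invoke the general conservation laws from \cite{Balan-Neagu} for the stress-energy d-tensor of an $h$-normal $\Gamma$-linear connection: these carry precisely the three-term divergence structure $\mathcal{T}_{A/1}^{1}+\mathcal{T}_{A|m}^{m}+\mathcal{T}_{(1)A}^{(m)}|_{(m)}^{(1)}$ appearing on the left-hand sides, with right-hand sources determined by the (generally non-vanishing) divergence of the Einstein tensor in this setting.

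The bulk of the work is then a direct evaluation of the three covariant divergences using the Cartan coefficients (\ref{Cartan-can-connect}), namely $G_{j1}^{k}=0$, $L_{jk}^{i}=n\delta_{j}^{i}\delta_{k}^{i}\sigma_{i}$ and the vertical tensor $C_{j(k)}^{i(1)}$, together with the adapted derivations (\ref{a-b-v}). The computation is organized around homogeneity: $G_{1[n]}$ is homogeneous of degree $n$ and $Y_{11}=\sum_{p<q}\sigma_{pq}y_1^py_1^q$ of degree $2$ in the $y$-variables, so $y_1^p\,\partial(\cdot)/\partial y_1^p$ acts as multiplication by the degree on each factor, which is what makes the temporal equation collapse. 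Indeed, since $\mathcal{T}_1^m=0$ and $\mathcal{T}_{(1)1}^{(m)}=0$, the first law reduces to $\delta\mathcal{T}_1^1/\delta t=0$, and I expect the $\partial/\partial t$ contribution through $h_{11}(t)$ to cancel exactly against the $\text{\textsc{k}}_{11}^{1}\,y_1^p\,\partial/\partial y_1^p$ contribution through $G_{1[n]}^{-2/n}$, using $\text{\textsc{k}}_{11}^{1}h_{11}=\tfrac12\,dh_{11}/dt$. The second law is the most transparent: since $\mathcal{T}_i^1=0$ kills the temporal term and $\mathcal{T}_{(1)i}^{(m)}=0$ kills the vertical term, the identity degenerates to the bare horizontal divergence $\mathcal{T}_{i|m}^{m}=E_{i|m}^{m}$, which is exactly its stated definition once $\mathcal{T}_i^m$ is identified with $E_i^m$. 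Throughout I would lean on the vertical identities (\ref{equalitie-C}), in particular $C_{j(m)}^{m(1)}=0$ and $C_{j(m)}^{i(1)}y_1^m=0$, to annihilate the contraction terms $\mathcal{T}^{(r)}C_{r(m)}^{m(1)}$ and to simplify the vertical derivations.

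The main obstacle is the third (vertical) equation. Its left-hand side contains $\partial\mathcal{T}_{(1)(i)}^{(m)(1)}/\partial y_1^m$, and $\mathcal{T}_{(1)(i)}^{(m)(1)}$ carries both the inhomogeneous factor $\tfrac{n-2}{n}h_{11}\,y_1^m/y_1^i$ and the $\delta_i^m$-term built from $Y_{11}$ and $h_{11}$; differentiating in $y_1^m$, summing over $m$, and recombining the result against the temporal and horizontal pieces so that everything collapses to $\tfrac{2e^{-2\sigma}G_{1[n]}^{-2/n}}{\mathcal{K}}\big[n\,\partial Y_{11}/\partial y_1^i-2\,Y_{11}/y_1^i\big]$ is the delicate step. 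Here I would again exploit Euler's relation for $Y_{11}$ (so that $y_1^m\,\partial Y_{11}/\partial y_1^m=2Y_{11}$) and the degree $-2$ homogeneity of $G_{1[n]}^{-2/n}$, while tracking carefully the ``no sum by $i$'' conventions. Matching the residual $n\,\partial Y_{11}/\partial y_1^i-2\,Y_{11}/y_1^i$ is where sign and index-bookkeeping errors are most likely, and it is the only place where the second $x$-derivatives $\sigma_{pq}$ survive in the final source term.
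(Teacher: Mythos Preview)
Your plan is correct and mirrors the paper's own argument, which is simply ``the local Einstein-like equations (\ref{E-1}), together with some direct computations''; you have in fact fleshed out considerably more detail than the paper provides. Two small remarks: first, in the third law the temporal and horizontal pieces $\mathcal{T}_{\ (i)/1}^{1(1)}$ and $\mathcal{T}_{\ (i)|m}^{m(1)}$ vanish outright (since $\mathcal{T}_{\ (i)}^{1(1)}=\mathcal{T}_{\ (i)}^{m(1)}=0$), so there is nothing to ``recombine'' against --- the entire left side reduces to $\sum_m \partial\mathcal{T}_{(1)(i)}^{(m)(1)}/\partial y_1^m$, and the cancellation of the $h_{11}$-terms happens internally in that single sum; second, the paper isolates the identity $\mathcal{T}_{(1)(r)}^{(m)(1)}C_{i(m)}^{r(1)}=0$ as the one nontrivial simplification in the vertical covariant derivative, and while your cited relations $C_{j(m)}^{m(1)}=0$ and $C_{j(m)}^{i(1)}y_1^m=0$ do yield it (split $\mathcal{T}_{(1)(r)}^{(m)(1)}$ into its $y_1^m/y_1^r$ and $\delta_r^m$ parts), it is worth stating explicitly.
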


\begin{proof}
The local Einstein-like equations (\ref{E-1}), together with some direct
computations, lead us to what we were looking for. Also note that we have
(summation by $m$ and $r$)%
\begin{equation*}
\mathcal{T}_{(1)(r)}^{(m)(1)}C_{i(m)}^{r(1)}=0.
\end{equation*}
\end{proof}

\subsection{Electromagnetic-like geometrical model}

\hspace{5mm}In the book \cite{Balan-Neagu}, a geometrical theory for
electromagnetism was also created, using only a given Lagrangian function $L$
on the $1$-jet space $J^{1}(\mathbb{R},M^{n})$. In the background of the jet
single-time (one-parameter) Lagrange geometry from \cite{Balan-Neagu}, one
works with the following \textit{non-isotropic electromagnetic distinguished 
}$2$\textit{-form} (sum by $i$ and $j$):%
\begin{equation*}
\mathbf{F}=F_{(i)j}^{(1)}\delta y_{1}^{i}\wedge dx^{j},
\end{equation*}%
where (sum by $m$ and $r$)%
\begin{equation*}
F_{(i)j}^{(1)}=\frac{h^{11}}{2}\left[ \overset{\ast }{g}_{jm}N_{(1)i}^{(m)}-%
\overset{\ast }{g}_{im}N_{(1)j}^{(m)}+\left( \overset{\ast }{g}%
_{ir}L_{jm}^{r}-\overset{\ast }{g}_{jr}L_{im}^{r}\right) y_{1}^{m}\right] .
\end{equation*}%
This is characterized by some natural \textit{geometrical Maxwell-like
equations} (for more details, see \cite{Mir-An} and \cite{Balan-Neagu}).

\begin{remark}
The Lagrangian function that governs the movement law of a particle of mass $%
m\neq 0$ and electric charge $e$, which is displaced concomitantly into an
environment endowed both with a gravitational field and an electromagnetic
one, is given by 
\begin{equation}
L(t,x^{k},y_{1}^{k})=mch^{11}(t)\;\varphi _{ij}(x^{k})\;y_{1}^{i}y_{1}^{j}+{%
\frac{2e}{m}}A_{(i)}^{(1)}(t,x^{k})\;y_{1}^{i},  \label{x05-L-(jet)-ED}
\end{equation}%
where

\begin{itemize}
\item the semi-Riemannian metric $\varphi _{ij}(x)$ represents the \textbf{%
isotropic gravitational potential};

\item $A_{(i)}^{(1)}(t,x)$ are the components of a d-tensor on the $1$-jet
space $J^{1}(\mathbb{R},M^{n})$ representing the \textbf{electromagnetic
potential}.
\end{itemize}

Note that the jet Lagrangian function (\ref{x05-L-(jet)-ED}) is a natural
extension of the Lagrangian (defined on the tangent bundle) used in
electrodynamics by Miron and Anastasiei \cite{Mir-An}. In our jet Lagrangian
formalism applied to (\ref{x05-L-(jet)-ED}), the \textbf{%
electromagnetic-like components} become classical ones (see \cite%
{Balan-Neagu}):%
\begin{equation*}
F_{(i)j}^{(1)}=-\frac{e}{2m}\left( \frac{\partial A_{(i)}^{(1)}}{\partial
x^{j}}-\frac{\partial A_{(j)}^{(1)}}{\partial x^{i}}\right) .
\end{equation*}%
Moreover, the second set of \textbf{geometrical Maxwell-like equations}\
reduce to the classical ones too (for more details, see \cite{Mir-An}, \cite%
{Balan-Neagu}):%
\begin{equation*}
\sum_{\{i,j,k\}}F_{(i)j|k}^{(1)}=0,
\end{equation*}%
where 
\begin{equation*}
F_{(i)j|k}^{(1)}=\frac{\partial F_{(i)j}^{(1)}}{\partial x^{k}}%
-F_{(m)j}^{(1)}\gamma _{ik}^{m}-F_{(i)m}^{(1)}\gamma _{jk}^{m}.
\end{equation*}%
Also, the \textbf{geometrical Einstein-like equations} attached to the
Lagrangian (\ref{x05-L-(jet)-ED}) (see \cite{Mir-An}, \cite{Balan-Neagu})
are the same with the famous classical ones (associated to the
semi-Riemannian metric $\varphi _{ij}(x)$). In author's opinion, these facts
suggest some kind of naturalness for the present abstract Lagrangian
non-isotropic electromagnetic and gravitational geometrical theory.
\end{remark}

Via some direct calculations, we easily deduce that the $(t,x)$-conformal
deformed Berwald-Mo\'{o}r metric (\ref{rheon-B-M}) produces null
non-isotropic electromagnetic components:%
\begin{equation*}
F_{(i)j}^{(1)}=0.
\end{equation*}%
It follows that our $(t,x)$-conformal deformed jet Berwald-Mo\'{o}r
geometrical electromagnetic-like theory is trivial. This fact probably
suggests that the $(t,x)$-conformal deformed Berwald-Mo\'{o}r geometrical
structure (\ref{rheon-B-M}) has rather gravitational connotations than
electromagnetic ones.

As a conclusion, it is possible for the recent Voicu-Siparov approach of the
electromagnetism in spaces with anisotropic metrics (that electromagnetic
approach is different from the electromagnetic theory exposed above, and it
is developed in the paper \cite{Voicu-Siparov}) to give other interesting
electromagnetic-geometrical results for spaces endowed with the Berwald-Mo%
\'{o}r geometrical structure.

\textbf{Open problem.} The author of this paper believes that the finding of
some possible real physical interpretations for the present non-isotropic
Berwald-Mo\'{o}r geometrical approach of gravity and electromagnetism may be
an open problem for physicists.

Mircea Neagu

University Transilvania of Bra\c{s}ov,

Department of Mathematics - Informatics

Blvd. Iuliu Maniu, no. 50, Bra\c{s}ov 500091, Romania.

\textit{E-mail:} mircea.neagu@unitbv.ro

\end{document}